\newtheorem{lemma}{Lemma}
\newtheorem{proposition}{Proposition}
\DeclareMathOperator*{\argmin}{argmin}
\let\OldStatex\Statex
\renewcommand{\Statex}[1][3]{%
  \setlength\@tempdima{\algorithmicindent}%
  \OldStatex\hskip\dimexpr#1\@tempdima\relax}
\newcommand{\vb}{\boldsymbol}
\newcommand{\mb}[1]{{\mathbf{#1}}}
\newcommand{\m}[1]{\mathrm{#1}}
\newcommand{\mt}[1]{\mathtt{#1}}
\newcommand{\ms}[1]{\mathsf{#1}}
\newcommand{\mc}[1]{{\mathcal{#1}}}
\newcommand{\PreserveBackslash}[1]{\let\temp=\\#1\let\\=\temp}
\newcolumntype{C}[1]{>{\PreserveBackslash\centering}p{#1}}
\newcolumntype{R}[1]{>{\PreserveBackslash\raggedleft}p{#1}}
\newcolumntype{L}[1]{>{\PreserveBackslash\raggedright}p{#1}}
\begin{document}
%
% paper title
% Titles are generally capitalized except for words such as a, an, and, as,
% at, but, by, for, in, nor, of, on, or, the, to and up, which are usually
% not capitalized unless they are the first or last word of the title.
% Linebreaks \\ can be used within to get better formatting as desired.
% Do not put math or special symbols in the title.

\title{Consensus Multi-Agent Reinforcement Learning for Volt-VAR Control in Power Distribution Networks}

\author{\IEEEauthorblockN{Yuanqi Gao, \textit{Student Member, IEEE}, Wei Wang, \textit{Student Member, IEEE}, and Nanpeng Yu, \textit{Senior Member, IEEE}}}

\markboth{Journal of \LaTeX\ Class Files,~Vol.~14, No.~8, August~2015}%
{Shell \MakeLowercase{\textit{et al.}}: Bare Demo of IEEEtran.cls for IEEE Journals}
% The only time the second header will appear is for the odd numbered pages
% after the title page when using the twoside option.
% 
% *** Note that you probably will NOT want to include the author's ***
% *** name in the headers of peer review papers.                   ***
% You can use \ifCLASSOPTIONpeerreview for conditional compilation here if
% you desire.

% If you want to put a publisher's ID mark on the page you can do it like
% this:
%\IEEEpubid{0000--0000/00\$00.00~\copyright~2015 IEEE}
% Remember, if you use this you must call \IEEEpubidadjcol in the second
% column for its text to clear the IEEEpubid mark.

% use for special paper notices
%\IEEEspecialpapernotice{(Invited Paper)}

% make the title area
\maketitle

% As a general rule, do not put math, special symbols or citations
% in the abstract or keywords.
\begin{abstract}

%In this paper, the Volt-VAR control problem is formulated as a constrained Markov decision process with discrete action space. An off-policy policy gradient based deep reinforcement learning algorithm is proposed  on the basis of the entropy regularized Bellman equations. The proposed algorithm effectively learns a policy that minimizes the total operation cost while satisfying the voltage constraints.
%The performance is evaluated with IEEE 4-bus,  34-bus and 123-bus distribution test feeders. The numerical study demonstrates that the proposed algorithm is data efficient, scalable, and outperforms the existing reinforcement learning approaches for the Volt-VAR control.

%State the problem
%Say why it’s an interesting problem
%Say what your solution achieves
%Say what follows from your solution

%It should concisely state what was done,
%how it was done
%principal results
%and their significance.

Volt-VAR control (VVC) is a critical application in active distribution network management system to reduce network losses and improve voltage profile. To remove dependency on inaccurate and incomplete network models and enhance resiliency against communication or controller failure, we propose consensus multi-agent deep reinforcement learning algorithm to solve the VVC problem. The VVC problem is formulated as a networked multi-agent Markov decision process, which is solved using the maximum entropy reinforcement learning framework and a novel communication-efficient consensus strategy. The proposed algorithm allows individual agents to learn a group control policy using local rewards. Numerical studies on IEEE distribution test feeders show that our proposed algorithm matches the performance of single-agent reinforcement learning benchmark. In addition, the proposed algorithm is shown to be communication efficient and resilient.

%In addition, the algorithm has significantly reduced communication cost compared to other state-of-the-art consensus protocols such as the alternating direction method of multipliers (ADMM). Numerical studies with three IEEE distribution test feeders show that the proposed multi-agent algorithm matches the performance of a well-tuned single agent baseline. In addition, the algorithm is robust against individual agent/communication link failure. 

\end{abstract}
%Distribution system Volt-VAR control (VVC) constitutes a major component for active network management to achieve voltage profile improvement and loss reduction.  To address the inaccurate and incomplete network model, and to improve the algorithm's robustness against communication/controller failure, this paper proposes a multi-agent deep reinforcement learning VVC algorithm.  We describe VVC as a networked multi-agent Markov decision process (MAMDP). Then we solve the networked MAMDP by the maximum entropy reinforcement learning and a novel communication-efficient consensus strategy.  The proposed algorithm can effectively learn a group policy without requiring network model parameter.

% Note that keywords are not normally used for peerreview papers.
\begin{IEEEkeywords}
Volt-VAR control, deep reinforcement learning, multi-agent, consensus optimization.
\end{IEEEkeywords}

% For peer review papers, you can put extra information on the cover
% page as needed:
% \ifCLASSOPTIONpeerreview
% \begin{center} \bfseries EDICS Category: 3-BBND \end{center}
% \fi
%
% For peerreview papers, this IEEEtran command inserts a page break and
% creates the second title. It will be ignored for other modes.
\IEEEpeerreviewmaketitle

\section{Introduction}
Volt-VAR control (VVC) determines the operation schedule of voltage regulating and VAR control devices to lower network losses, improve voltage profile, and reduce voltage violations \cite{Ahmadi2015VVO}. Traditional VVC adjust the tap positions of the on-load tap changers (OLTC) based on a line drop compensator (LDC), which models the voltage drop of the distribution line from the voltage regulator to the load center. However, the rapid growth of distributed energy resources makes it increasingly difficult to manage the voltage profile on active distribution networks.

To address the challenge of distribution system voltage control, a number of physical model-based and data-driven control methodologies have been proposed. The existing literature on VVC problem can be categorized into four groups according to the model assumption and the communication scheme: 1) model-based centralized, 2) model-based distributed, 3) data-driven centralized, and 4) data-driven distributed methods.

Model-based centralized methods assume that all distribution network measurements are collected by a central controller, which also has perfect knowledge of the distribution network parameters. The technical methods to solve the VVC problem include deterministic optimization, robust optimization, and meta-heuristic methods. The deterministic methods include dynamic programming \cite{Liang2001DP}, mixed-integer linear programming (MILP) \cite{borghetti2013using}, mixed-integer quadratically constrained programming (MIQCP) \cite{Ahmadi2015VVO}, and bi-level mixed-integer programming \cite{jha2019bi}. To account for the uncertainties in loads/DGs, robust VVC algorithms \cite{daratha2015robust} \cite{Zheng2017RobustRP} \cite{nazir2018two} have been developed. Meta-heuristic algorithms such as genetic algorithm \cite{Senjyu2008GA} and particle swarm optimization \cite{Yoshida2000PSO} have been adopted. 

To reduce the communication burden and enhance algorithms' resiliency against the failure of the centralized controller, model-based distributed algorithms for VVC have been studied. These methods include simulated annealing \cite{vaccaro2013voltage}, distributed decision making \cite{Farag2013novel}, and alternating direction method of multipliers (ADMM) considering the continuous relaxation of the discrete variables \cite{Robbins2016tap}. 

Model-based approaches assume complete and accurate physical network model, which are difficult to maintain for regional electric utilities. To overcome this problem, data-driven methods are deployed to determine control actions based on the operational data. A number of data-driven centralized methods have been proposed. In \cite{Bagheri2019Modelfree}, a $k$-nearest neighbor ($k$NN) regression model is used to estimate power loss and voltage change in response to the status change of VVC devices. Then, a heuristic approach is taken to determine the appropriate device status. In \cite{Pouriafari2019SVR}, a support vector regression (SVR) model is trained to approximate the power flow equation. The trained model is then embedded in a model predictive control (MPC) framework to obtain a one-day horizon VVC solution. Reinforcement learning (RL) and deep RL algorithms have also been developed for VVC. A batch RL algorithm that augments the historical dataset and trains a linear approximated action value function is proposed in \cite{Xu2019BatchRL}. The VVC problem is modeled as a constrained Markov decision process (CMDP) \cite{Wei2019Safe}. A safe off-policy RL algorithm is developed to avoid voltage violation while minimizing network losses and wear and tear of equipment.

%Reference \cite{Wei2019Safe} modeled VVC as a constrained Markov decision process (CMDP) and developed a safe off-policy RL algorithm. The framework achieves improved constraint satisfaction compared to several deep RL benchmarks.

%Marrying data-driven with non-centralized communication have great potential in practice due to their robustness under network model uncertainty and agent/communication failure, as well as improved communication efficiency.
Data-driven centralized methods are particularly advantageous when the distribution network model is unavailable. However, if the central controller fails, then the entire VVC system breaks down. Thus, extending data-driven centralized methods to enable decentralized communication and control will significantly improve the resiliency of the algorithm against individual controller or communication link failure. 

Very few data-driven decentralized VVC algorithms have been developed. Reference \cite{Xu2012Multiagent} proposes a multi-agent tabular Q-learning algorithm, in which the agents discover the global reward through a diffusion consensus protocol. Then the local Q values are updated by the standard Q-learning update. Reference \cite{zhang2020deep} developed a multi-agent deep Q-network (DQN) algorithm, which decouples the global action space into individual device's control space. However, the existing methods are either incapable of handling large state space or do not enable coordination between the individual agents.

%However, these methods can either process a simple state space, which do not model the network power injections and voltages, or do not employ coordination between the agents.
%still requires an accurate distribution network model to provide simulation data.

In this paper, we propose a consensus multi-agent RL (C-MARL) algorithm for VVC in power distribution systems, which does not rely on accurate network model and handles state space with higher dimensionality. The proposed framework consists of a group of networked agents managing different VVC devices. Each agent learns two parametric models to approximate the global state value function and the local policy, respectively. These models are trained to maximize the agents' own expected cumulative local rewards, while minimizing the dissimilarity between their neighbors' and their own value functions in a communication-efficient manner. The performance of C-MARL is evaluated on three IEEE test feeders. The experimental results show that our proposed %consensus multi-agent RL algorithm 
C-MARL algorithm is capable of learning a distributed Volt-VAR control policy that matches the performance of the single-agent RL benchmark. The proposed algorithm is resilient against the failure of individual agents and communications links. Furthermore, the proposed algorithm is much more communication-efficient than the ADMM-based consensus scheme.
%Compared to the ADMM-based consensus scheme, the proposed algorithm is much more communication-efficient.

The unique contributions of this paper are as follows:

$\bullet$ This paper proposes a novel consensus multi-agent RL algorithm to learn distributed Volt-VAR control policies from historical operational data.

$\bullet$ The C-MARL framework is communication efficient and significantly lowers the amount of data required to reach consensus.
%We derive a communication-efficient consensus learning framework that significantly lower the amount of data required to reach consensus.

$\bullet$ The proposed C-MARL algorithm matches the performance of the single-agent RL benchmark in terms of network and equipment operational costs.

$\bullet$ The proposed C-MARL algorithm is resilient against the failures of individual controllers and communication links.

The remainder of the paper is organized as follows: Section II presents the problem formulation of VVC. Section III provides the technical methods. Section IV discusses the setup and results of experimental studies. Section V states the conclusion.

\section{Problem Formulation}
In this section, we formulate the VVC problem as a networked multi-agent Markov decision process (MAMDP). We first introduce the concept of MAMDP, then we discuss the problem formulation.
\subsection{Basics of MAMDP}
A networked MAMDP \cite{FMARL2018} is a tuple $\mc{M} = (\mc{S}, \{\mc{A}^i\}_{i=1}^K, P, \{r^i\}_{i=1}^K, \mc{G}, \gamma)$ which consists of a global state space $\mc{S}$, $K$ local action spaces $\mc{A}^i$, a global state transition probability $P(s^\prime|s, a^1, a^2,\cdots,a^K)$ $\forall s,s^\prime \in \mc{S}, \forall a^i\in \mc{A}^i$, $K$ local reward functions $r^i(s, a^1, a^2, \cdots, a^K): \mc{S}\times \mc{A}^1\times\mc{A}^2\times \cdots \times \mc{A}^K\mapsto \mathbb{R}$, a communication network $\mc{G}=(\mc{V},\mc{E})$, and a discount factor $\gamma$. In a networked MAMDP, a set of $K$ learning agents select their local actions $A^{i}_t\in\mc{A}^i$ based on the current state $S_t\in\mc{S}$ at each discrete time step $t$. Then each of the agents receives a numerical reward $R^i_{t+1} = r^i(S_t,A^1_t,A^2_t,\cdots,A^K_t)$ and the environment's global state transitions to $S_{t+1}$ based on the state transition probability $P(S_{t+1}|S_t, A^1_t, A^2_t,\cdots,A^K_t)$. Also at time $t$, each agent $i$ can communicate and share its local information with its neighbors defined in the communication graph $\mc{G}$. In this paper, we assume $\mc{G}$ is connected. The neighbors of agent $i$ are denoted as $\mc{V}^i$. For notational simplicity, we denote the joint action and action space as $A_t = [A_t^1, A_t^2, \cdots, A_t^K]$ and $\mc{A} = \prod_{i=1}^K \mc{A}^i$, respectively. We also denote $R_{t+1} = r(S_t, A_t) = \frac{1}{K} \sum_{i=1}^K R_{t+1}^i = \frac{1}{K} \sum_{i=1}^K r^i(S_t,A_t)$ as the global averaged reward.

The goal of the networked agents is to find each agent's local control policy $\pi^i(a^i|s)$, such that the joint policy $\pi(a^1, a^2, \cdots, a^K|s)$ of all agents maximizes the expected discounted averaged return $J(\pi) = \mathbb{E}[G(\tau)]$, where $\tau$ is a trajectory of global states and global actions $S_0, A_0, S_1, A_1, \cdots$, and $G$ is the function that maps a trajectory to the discounted averaged return $G(\tau) = \sum_{t=0}^T \gamma^t \frac{1}{K} \sum_{i=1}^K R^i_{t+1}$. The local policy $\pi^i(a^i|s)$ represents a conditional probability distribution of local actions given the global state $s$. We assume the global policy is factored as $\pi(a^1, a^2, \cdots, a^K|s) = \prod_{i=1}^K \pi^i(a^i|s)$.

Two important functions for the multi-agent RL are the global state value function $v_\pi(s)$ and the global action value function $q_\pi(s,a)$ with respect to a given joint policy $\pi$. They are defined formally as:
\begin{align}
v_\pi(s) &= \underset{\tau \sim \pi}{\mathbb{E}}\left[\textstyle \sum_{k=0}^T\gamma^k R_{t+k+1}|S_t=s\right] \label{eq1}\\
\hspace*{-1cm}q_\pi(s,a) &= \underset{\tau \sim \pi}{\mathbb{E}}\left[\textstyle \sum_{k=0}^T\gamma^k R_{t+k+1}|S_t=s,A_t=a\right] \label{eq2}
\end{align}
$v_\pi(s)$ and $q_\pi(s,a)$ capture the expected return committing to a given policy for the starting state $s$ and action $a$. The optimal policy is thus the one that maximizes $v_\pi(s)$ for all $s$ (or maximizes $q_\pi(s,a)$ for all $s,a$).

In the next subsection, the distributed VVC problem will be formulated as a networked MAMDP.
\subsection{Formulate VVC as a MAMDP}
In this subsection, we first provide a brief introduction of the proposed multi-agent RL (MARL) VVC framework. Then we present the problem formulation.

We consider a radial distribution network whose node set is denoted as $\mc{N}$. The substation is numbered as $0$ and all other nodes are numbered as $1,\cdots,n$. The nodal voltage magnitude, real and reactive power at time $t$ of node $i\in\mc{N}$ is denoted as $V^i_t$, $p^i_t$, and $q^i_t$, respectively. Vectors $\mb{p}_t = [p^1_t, p^2_t, \cdots, p^n_t]$ and $\mb{q}_t = [q^1_t, q^2_t, \cdots, q^n_t]$ group all nodal real and reactive power injections except for the substation node.

Three types of VVC devices are considered in this paper:

$\bullet$ A voltage regulator is placed at the substation (reference node). Thus, the reference voltage of the network at time step $t$ can take on several discrete values $V^{0}_t = 1 \m{p.u.} + x^{\m{reg}}_t \cdot M^{\m{reg}}$ according to the tap position $x_t^{\m{reg}}$ and the fixed step size $M^{\m{reg}}$. The numerical values will be provided in Section IV.

$\bullet$ A capacitor bank's reactive power output $q^{i,\m{cap}}_t$ is determined by its on/off status and nodal voltage as: $q^{i,\m{cap}}_t = x_t^{\m{cap}} \cdot M^{\m{cap}} \cdot (V^i)^2$. $x_t^{\m{cap}}\in\{0,1\}$ denotes the on/off status. $M^{\m{cap}}$ denotes the rated reactive power of the capacitor.

%is modeled as an admittance element whose reactive power is determined by its on/off status and voltage squared:  $x_t^{\m{cap}}\in\{0,1\}$. $q^{i,\m{cap}}_t$ is the additional reactive power injection at node $i$ due to the capacitor.

$\bullet$ An on-load tap changer (OLTC) is modeled as an ideal transformer with a variable turns ratio. When an OLTC is present on a branch $(i,j)$, its branch power flow is described by (\ref{oltc}) in the DistFlow equation:
\begin{align}\label{oltc}
(V^{j}_t)^2/a_t^2 = (V^{i}_t)^2 - 2r^{ij}p^{ij}_t - 2x^{ij}q^{ij}_t + [(r^{ij})^2 + (x^{ij})^2]l^{ij}_t
\end{align}
where $p^{ij}_t$ and $q^{ij}_t$ are the branch power flow and $l^{ij}_t$ is the square of branch current. The turns ratio at time $t$ is given by $a_t = 1+x_t^{\m{tsf}}\cdot M^{\m{tsf}}$, where $x_t^{\m{tsf}}$ is the tap position and $M^{\m{tsf}}$ denotes the step size. 

\begin{figure}[h]
	\centering
	\includegraphics[width=8.5cm]{./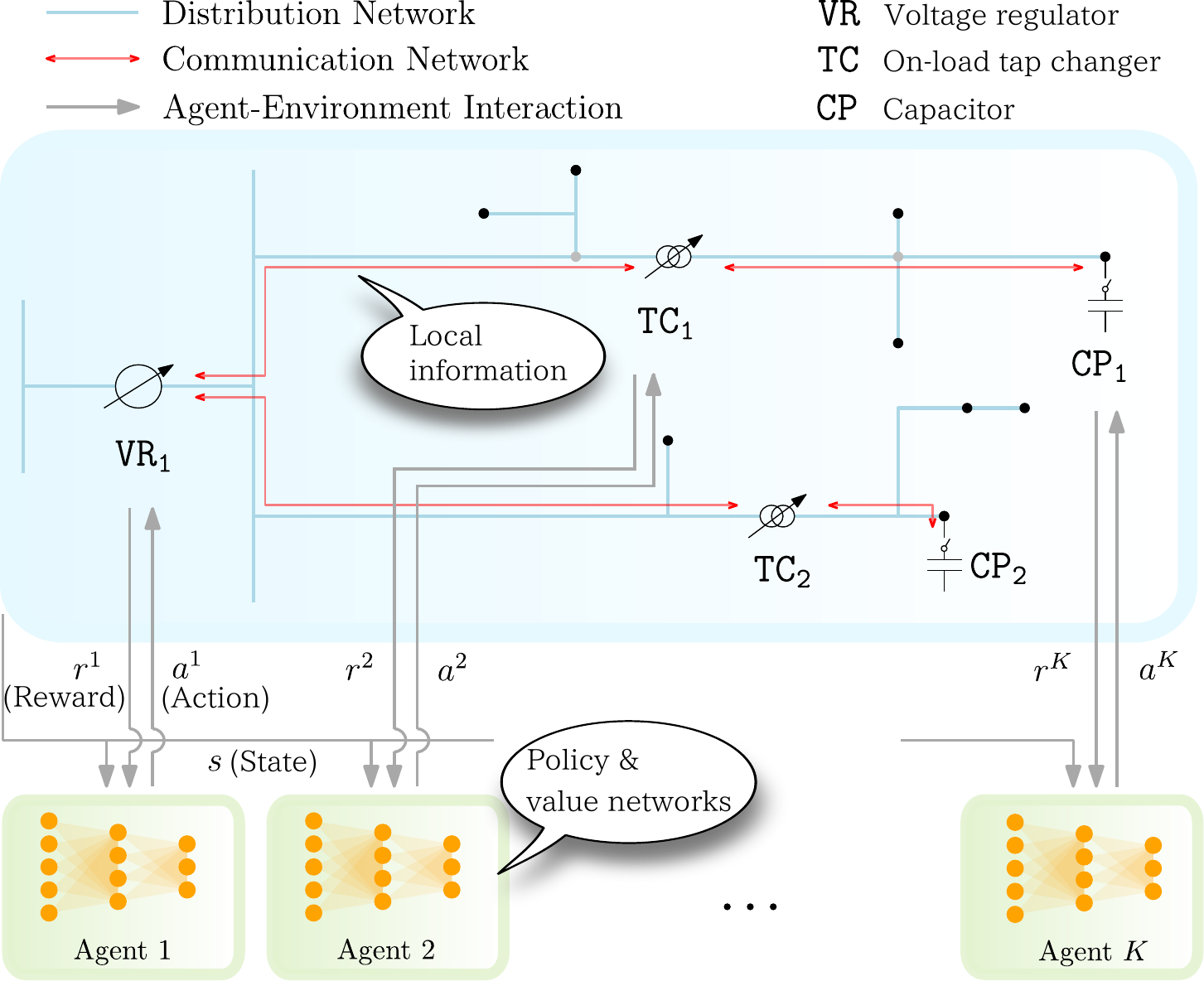}
	\caption{The proposed MARL based VVC framework}
	\label{fig_overview}
\end{figure}
An overview of the proposed MARL based VVC framework is shown in Fig. \ref{fig_overview}. Each agent is associated with one VVC device and determines its own control actions. Therefore, for the substation voltage regulator agent $1$, the local action is the discrete tap number $A^1_t = x^{\m{reg}}_{t+1}$. The subscript $t+1$ in $x^{\m{reg}}_{t+1}$ designates that it is the new tap position after action $A^1_t$ is taken. The local action spaces for capacitor agents and OLTC agents are defined in a similar manner. The design of the local reward functions should satisfy two requirements: 1) the averaged rewards $\frac{1}{K}\sum_{i=1}^K r^i(s,a)$ should reflect the networked agents' VVC objective and 2) each local reward must be calculated based on the local metering data received by the corresponding agent. To this end, we define the local reward function as
\begin{align}\label{local_reward}
R^i_{t+1} = r^i(S_t, A_t) = -C^l \sum_{\ell\in \mc{L}_i} p^{l,\ell}_t - C^s|x^i_{t}-x^i_{t+1}| - \bar{\lambda} C^i_{t+1}
\end{align}
where $p^{l,\ell}_t$ is the real power loss on branch $\ell$ after the joint action $A_t$ is taken; $\mc{L}_i$ is the set of branches metered by agent $i$; $C^l$ and $C^s$ are the costs associated with power loss and devices' switching actions, respectively. $x^i_t$ is the generic term of the discrete control. For example, $x^i_t = x^{\m{reg}}_t$ if agent $i$ is associated with a voltage regulator. The term $C^i_{t+1}$ describes the voltage constraint violation and $\bar{\lambda}$ is the associated penalty factor. The constraint violation is given by:
\begin{align}\label{local_cv}
C^i_{t+1} = c^i(S_t, A_t) = \sum_{k\in \mc{N}_i}[\mathbb{I}(V^k_t>\bar{V}) +\mathbb{I}(V^k_t<\underline{V})]
\end{align}
where $\mathbb{I}$ is the indicator function; $\mc{N}_i$ is the set of nodes metered by agent $i$; $V^k_t$ represents the voltage magnitude followed by the joint action $A_t$. The exact formulation of the sets $\mc{L}_i$ and $\mc{N}_i$, as well as various parameters $C^l$, $C^s$, and $\lambda$ will be described for each test feeder in Section IV. The global state at time $t$ is defined as $S_t = [\mb{p}_t, \mb{q}_t, A_{t-1}, t]$. That is, the global state contains the network power injections, the existing VVC devices' status at the previous time step, and a discrete time step $t$. Finally, we choose a global discount factor $\gamma$ that is less than one. This completes the formulation of the distributed VVC problem as an MAMDP.

With this MAMDP formulation, we can interpret the value functions (\ref{eq1}) and (\ref{eq2}) in terms of VVC as follows: At each time step $t$, the networked agent's goal is to minimize the long term discounted operational cost and the constraint violation. This long term objective does not easily break into a set of unrelated single time step objectives, because the cost of device switching links the goal of adjacent time steps.

In the next section, we present the technical details of the multi-agent RL algorithm.

\section{Technical Methods}
In this section, we present the proposed consensus multi-agent deep RL-based VVC algorithm. We derive the proposed algorithm in three stages. First, we review the preliminary of centralized off-policy maximum entropy RL framework. Then we reformulate this framework into a distributed multi-agent framework. Finally, we present the proposed communication-efficient %consensus multi-agent deep RL algorithm 
C-MARL algorithm to solve the VVC problem.

\subsection{Off-policy Maximum Entropy RL}
In maximum entropy RL, the policy maximizes both the return and the entropy of the policy \cite{SAC}. In the context of data-driven VVC, the policy entropy maximization is introduced for two reasons. First, without an accurate physical model, all data-driven methods must involve some sort of exploration \cite{Bagheri2019Modelfree}. That means, it must try different control actions before becoming informed about which of them is the best. To this end, maximum entropy RL provides an efficient and principled way for balancing the exploration and exploitation \cite{sutton2018reinforcement}. Second, an off-policy algorithm can be derived within the maximum entropy RL framework. Off-policy RL algorithms are capable of learning from past experiences so that it can be trained using a much smaller amount of samples collected from the distribution grid. Next, we provide a mathematical characterization of the optimal policy in maximum entropy RL. This is critical to the development of off-policy RL algorithms.

The maximum entropy RL regularizes the reward function by the entropy of the policy $r(s,a) + \alpha H(\pi(\cdot|s))$. $\alpha$ is a temperature parameter that determines the contribution of the entropy to the reward. The state value function in this case is defined as:
\begin{align}\label{eq_def_v}
\ms{v}_\pi(s) = \underset{\tau \sim \pi}{\mathbb{E}}\left[\textstyle \sum_{k=0}^T\gamma^k (R_{t+k+1} + \alpha H(\pi(\cdot|S_{t+k}))) |S_t=s\right]
\end{align}
%where the superscript $\m{h}$ denotes entropy regularization. 
where $\ms{v}_\pi(s)$ denotes the entropy regularized value function. It follows from (\ref{eq_def_v}) that a policy, which maximizes $\ms{v}_\pi(s)$ is maximizing the combined return and policy entropy. The latter maintains a certain level of stochasticity of the policy. Thus we can balance the exploration and exploitation by following the current policy $\pi$ throughout the learning process.

The Bellman equation for $\ms{v}_\pi(s)$ is derived as:
\begin{align}
\ms{v}_\pi(s) & = \mathbb{E}_{a\sim\pi}\mathbb{E}_{s^\prime\sim P}\left[ r+\gamma \ms{v}_\pi(s^\prime)\right] + \alpha H(\pi(\cdot|s)) \nonumber \\
& = \mathbb{E}_{a\sim\pi}\left[r + \gamma\mathbb{E}_{s^\prime\sim P}\left[ \ms{v}_\pi(s^\prime)\right] - \alpha \log\pi(a|s)\right]\label{eq_bellman}
\end{align}
The definition of entropy is used to derive the second equality. The optimal entropy-regularized state value function is defined as $\ms{v}_*(s) = \max_\pi \ms{v}_\pi(s)$. Similarly, the optimal entropy-regularized policy $\pi^*$ is defined as the one whose entropy-regularized value function is $\ms{v}_*(s)$. We term the two-tuple $(\ms{v}_*, \pi^*)$ an optimality pair, which is shown to be the solution to the off-policy consistency equation \cite[Corollary 21]{BGap}:
\begin{align}\label{eq_bellman_off}
\ms{v}(s) = r(s,a) + \gamma \mathbb{E}_{s^\prime\sim P}\left[\ms{v}(s^\prime)\right] - \alpha \log\pi(a|s)\quad \forall s,a
\end{align}
(\ref{eq_bellman_off}) characterizes the optimal policy and motivates our subsequent algorithm developments. However, it is very challenging to solve (\ref{eq_bellman_off}) directly due to high-dimensional and continuous state space. In addition, (\ref{eq_bellman_off}) is stated in a centralized format. In the next subsection, we derive a distributed and off-policy algorithm to approximate the solution to (\ref{eq_bellman_off}) in a sample-efficient manner. To make the learning tractable, we restrict the class of functions we consider for the value function and the policy function.
%Also, we restrict the value function and the policy in small function classes so that the learning is tractable.
%The derived algorithm is off-policy so that the learning is sample-efficient.
%(\ref{eq_bellman_off}) differs from (\ref{eq_bellman}) by dropping the outer expectation operator.

\subsection{Distributed Optimization}
In this subsection, we transform the problem of finding optimal local policy and state value function as a distributed consensus optimization problem of the following form:
\begin{align}\label{general_opt}
 \underset{\vb{w}}{\min} \quad \textstyle \sum_{i=1}^K J^i(w^i) \quad \text{s.t.}\; w^1 = w^2 = \cdots = w^K
\end{align}
where $J^i$ are the local objective functions. (\ref{general_opt}) appears ubiquitously in distributed adaptive learning \cite{diffusion}, distributed algorithms for linear algebraic systems \cite{Peng2018communication}, and distributed parameter estimation \cite{Xie2012fully}.

% can also be organized as this form. As it will be shown, it is compatible with the definition of the networked MAMDP formulated in Section II. 
We first approximate the solution of optimal policy and value function with the following stochastic nonlinear program, which is commonly done in deep RL literature \cite{mnih2015human}:
\begin{align}\label{cen_opt}
\min_{\ms{v},\pi} \;  \underset{s,a\sim \mc{D}}{\mathbb{E}}\left(\ms{v}(s) - \{r+\gamma\mathbb{E}_{s^\prime\sim P}\left[ \ms{v}(s^\prime)\right] - \alpha \log\pi(a|s) \} \right)^2
\end{align}
$\mc{D}$ is the data distribution, which will be approximated by an experience replay buffer \cite{mnih2015human}. Next, we define $w^i$ in (\ref{cen_opt}) as each agent's local copy of the global value and policy functions $\ms{v}^{i}(s)$ and $\pi^i(a|s)$. At optimality, these local functions need to reach consensus. Thus $\ms{v}^{i} = \ms{v}^{j}$ and $\pi^i = \pi^j, \forall i,j$ constitute the constraints in (\ref{general_opt}). Although the joint policy $\pi^i(a|s)$ is maintained by all agents, only the $i$-th coordinate $\pi^i(a^i|s)$ (the $i$-th local action) is actuated by agent $i$. Note that $\ms{v}^i$ and $\pi^i$ are infinite dimensional.

Now to decompose the objective function in (\ref{cen_opt}), we first declare two sets of functions for later derivations. Let $f(s,a) = \ms{v}(s) - \gamma\mathbb{E}_{s^\prime\sim P}\left[ \ms{v}(s^\prime)\right] + \alpha \log\pi(a|s)$ and $f^i(s,a) = \ms{v}^i(s) - \gamma\mathbb{E}_{s^\prime\sim P}\left[ \ms{v}^i(s^\prime)\right] + \alpha \log\pi^i(a|s)$. $\zeta(s,a) = [\ms{v}(s), \pi(a|s)]^T$ and $\zeta^i(s,a)=[\ms{v}^i(s), \pi^i(a|s)]^T$.
Then, the minimization problem (\ref{cen_opt}) can be written as:
\begin{align}
  & \argmin_{\zeta} \;  \underset{s,a,r\sim \mc{D}}{\mathbb{E}}\big(f(s,a) - r \big)^2 \\
= & \argmin_{\zeta} \;  \underset{s,a,r\sim \mc{D}}{\mathbb{E}}f(s,a)^2 - 2rf(s,a) + r^2 \\
= & \argmin_{\bm{\zeta} \in \Omega} \;  \underset{s,a,r^i\sim \mc{D}}{\mathbb{E}}\frac{1}{K}\sum_{i=1}^K f^i(s,a)^2 - \nonumber \\
 & \hphantom{some invisible char} \frac{1}{K}\sum_{i=1}^K 2r^if^i(s,a) + \frac{1}{K}\sum_{i=1}^K {(r^{i})}^2 \\
= & \argmin_{\bm{\zeta} \in \Omega} \;  \underset{s,a,r^i\sim \mc{D}}{\mathbb{E}} \frac{1}{K}\sum_{i=1}^K \big( f^i(s,a) - r^i \big)^2 \label{decen_opt}
\end{align}
where $\bm{\zeta} = [(\zeta^1)^T, (\zeta^2)^T, \cdots, (\zeta^K)^T]^T$. $\Omega$ is the set containing all $\bm{\zeta}$ such that $\zeta^1 = \zeta^2 = \cdots = \zeta^K$. 
%For connected communication graph $\mc{G}$, the set is equivalent to $\zeta^i=\zeta^j \; \forall (i,j)\in \mc{E}$ or $N^i\zeta^i = \sum_{j\in\mc{V}^i} \zeta^j\;\forall \mc{V}$ with $N^i=|\mc{V}^i|$. 
Using the degree matrix $D$ and the adjacency matrix $A$ of $\mc{G}$, the constraints can be rewritten as $(D_{ii}\otimes I_2) \zeta^i = (A_{i}\otimes I_2) \bm{\zeta}$, $\forall i$. $I_2$ is the identity matrix of size 2, $\otimes$ designates Kronecker product, and $A_i$ denotes the $i$th row of $A$. We will use the notations $\bar{D}=D\otimes I_2$ and $\bar{A}=A\otimes I_2$, with $\bar{D}_{ii}$ and $\bar{A}_i$ being understood as the $ii$-th block and $i$-th block row, respectively.
%The requirement $\pi^i = \pi^j$ in (\ref{decen_opt}) is the consequence that each agent must be aware of other agents' policies in order to reach consensus. As a result, 
%We address the issue of large combinatorial global action space in Section III.D. 
(\ref{decen_opt}) decomposes the global learning objective and is compatible with the MAMDP model. Specifically, each agent $i$ receives the local reward $r^i$ and takes local actions $a^i$. The consensus is achieved through neighbor-to-neighbor communication. 

To derive a tractable learning algorithm for optimization problem (\ref{decen_opt}), we parameterize $\ms{v}^i$ and $\pi^i$ as function approximators such as deep neural networks (NN): $\ms{v}_{\psi_i}\approx \ms{v}^i$ and $\pi_{\phi_i}\approx \pi^i$ with the parameters of deep NNs denoted by $\varphi_i = [\psi_i,\phi_i]$. 
Additionally, 
%there is an inner expectation $\mathbb{E}_{s^\prime\sim P}\left[ \ms{v}^i(s^\prime)\right]$ of $f^i$ inside the square. Finding an unbiased gradient estimate of this objective function is not possible without two independent samples $s^\prime$, known as the double sampling issue \cite{sutton2018reinforcement}. To improve the learning stability, 
we use a separate target network $\ms{v}_{\bar{\psi}_i}$ for evaluating $\ms{v}^i(s^\prime)$ \cite{SAC}. We denote the $\varphi_i, \bar{\psi}_i$-parameterization of $f^i$ and $\zeta^i$ as $\bar{f}_{\varphi_i}$ and $\zeta_{\varphi_i}$. %with the overbar emphasizes it contains a delayed update target network; 
 
%To summarize, the consensus MAMDP problem is derived as 
Therefore, (\ref{decen_opt}) can be rewritten as:
\begin{equation}\label{mamdp_problem}
\begin{aligned}
\underset{\varphi_i}{\min} &
& & \underset{s,a,r^i,s^\prime \sim \mc{D}}{\mathbb{E}} \frac{1}{K}\sum_{i=1}^K \big(\bar{f}_{\varphi_i}(s,a,s^\prime) - r^i \big)^2 \\
\text{s.t.} &
& &  \bar{D}_{ii}\zeta_{\varphi_i}(s,a) = \bar{A}_i \bm{\zeta}_{\bm{\varphi}}(s,a) \quad \forall i\in\mc{V}, s, a
\end{aligned}
\end{equation}
where $\bm{\zeta}_{\bm{\varphi}} = [\zeta_{\varphi_1}^T,\zeta_{\varphi_2}^T,\cdots, \zeta_{\varphi_K}^T]^T$. Note that, (\ref{mamdp_problem}) has a finite number of decision variables and infinitely many constraints. This infinite constraint set can be reformulated as a finite one, $\varphi_i=\varphi_j$, $\forall i, j$. Several methodologies such as ADMM \cite{ADMM01} and diffusion adaptation strategies \cite{diffusion} can be used to solve (\ref{mamdp_problem}) with the finite constraint set. However, it is extremely costly to communicate the full set of parameters of deep neural networks. To address this problem, in the next subsection, we leave the constraints as they are and derive a stochastic approximation type algorithm to solve (\ref{mamdp_problem}), which significantly improves the communication efficiency.

%In the next subsection, we view the constraints as-is and derive a stochastic approximation type algorithm, which improves the communication efficiency.
%the full set of deep neural network parameters are costly to communicate, as required by these algorithms.

\subsection{Communication-Efficient Multi-Agent Policy Consensus}
The goal of this subsection is to approximate the solution to (\ref{mamdp_problem}) by randomization. That is, we randomly enforce a subset of all constraints in each iteration. First, we adopt the model for semi-infinite programming in \cite{tadic2006randomized} to convert the infinite constraint set into a finite one. Specifically, we approximate the constraints represented in (\ref{mamdp_problem}) as the following stochastic programming representation:
\begin{align}\label{stochastic_representation}
    \int_{s,a\in\mc{S}\times\mc{A}} \mb{h}\big(\bar{D}_{ii}\zeta_{\varphi_i}(s,a) - \bar{A}_i\bm{\zeta}_{\bm{\varphi}}(s,a)\big) d \mu_{\mc{SA}} = \mb{0}
\end{align}
for all $i\in\mc{V}$. $\mb{h}(x,y)=[h(x),h(y)]$ stacks two penalty functions $h$, which satisfies $h(0)=0$ and $h(x)>0, \forall x\neq 0$. $\mu_{\mc{SA}}$ is a probability measure defined on the global state-action space $\mc{S}\times \mc{A}$. Assuming the continuity of the function $\zeta_{\varphi_i}$ and $h$, as well as full support assumption of $\mu_{SA}$, we can establish the following propositions:
\begin{proposition}\label{prop1}
Let $d$ be a metric on $\mc{S}\times \mc{A}$. Assume $\zeta_{\varphi_i}$ is continuous with respect to $d$ for every $\varphi_i$ and $h$ is continuous with respect to the Euclidean metric. Also assume $\mu_{\mc{SA}}(X)>0$ for every non-empty open subset $X\subseteq \mc{S}\times \mc{A}$. Then the constraint set in (\ref{mamdp_problem}) is equivalent to (\ref{stochastic_representation}).
\end{proposition}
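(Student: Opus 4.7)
The plan is to prove the claimed equivalence by verifying both directions, each of which is stated agent-wise, so for fixed $i$ I treat the constraint independently. Introduce the error map $e_i(s,a) = \bar{D}_{ii}\zeta_{\varphi_i}(s,a) - \bar{A}_i\bm{\zeta}_{\bm{\varphi}}(s,a)$, which takes values in $\mathbb{R}^2$ and is continuous on $\mc{S}\times\mc{A}$ because each $\zeta_{\varphi_j}$ is continuous and the Kronecker-structured linear combination preserves continuity. Equation (\ref{stochastic_representation}) then reads $\int \mb{h}(e_i)\, d\mu_{\mc{SA}} = \mb{0}$, componentwise.

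The forward implication is immediate: if $e_i(s,a) = 0$ everywhere on $\mc{S}\times\mc{A}$, then $\mb{h}(e_i(s,a)) = \mb{h}(0,0) = \mb{0}$ by the assumption $h(0)=0$, so the integral vanishes trivially.

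For the reverse direction I would argue by contrapositive. Suppose there exist $i$ and $(s_0, a_0)$ at which some scalar component $e_i^{(k)}(s_0,a_0) \neq 0$. By continuity of $e_i^{(k)}$ with respect to the metric $d$, there is an open neighborhood $U \subseteq \mc{S}\times\mc{A}$ of $(s_0,a_0)$ on which $|e_i^{(k)}|$ stays bounded away from zero. Continuity of $h$ together with $h(x) > 0$ for $x \neq 0$ then yields a uniform lower bound $h(e_i^{(k)}(s,a)) \geq \eta > 0$ on $U$ (the delicate step addressed below). The full-support hypothesis $\mu_{\mc{SA}}(U) > 0$ gives
\begin{equation*}
\int_{\mc{S}\times\mc{A}} h\bigl(e_i^{(k)}(s,a)\bigr)\, d\mu_{\mc{SA}} \; \geq \; \eta\, \mu_{\mc{SA}}(U) \; > \; 0,
\end{equation*}
contradicting (\ref{stochastic_representation}) in its $k$-th component. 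Hence $e_i \equiv 0$, which recovers the original constraint in (\ref{mamdp_problem}).

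The one subtle point I expect to dwell on is the extraction of the uniform lower bound $\eta$. Although $h$ is strictly positive off the origin, it is not a priori bounded below on all of $\mathbb{R}\setminus\{0\}$, so the bound has to be obtained locally. I would handle this by shrinking $U$ to a relatively compact neighborhood of $(s_0, a_0)$ so that $e_i^{(k)}(\overline{U})$ is a compact subset of $\mathbb{R}$ avoiding $0$; continuity of $h$ on this compact set together with its strict positivity there yields a positive minimum, which serves as $\eta$. Everything else in the argument follows from continuity, the hypothesis on $h$, and full support of $\mu_{\mc{SA}}$, so this localization is the main technical moment of the proof.
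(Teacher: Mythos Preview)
Your argument is correct and follows the same contrapositive-plus-full-support strategy as the paper. The one place you make extra work for yourself is the ``delicate step'': the paper avoids your compactness detour by applying continuity directly to the composite $h\circ e_i^{(k)}$, which is continuous and strictly positive at $(s_0,a_0)$, so an $\epsilon$--$\delta$ argument immediately gives an open ball on which it exceeds $h(e_i^{(k)}(s_0,a_0))-\epsilon>0$. Your route through a relatively compact $\overline{U}$ tacitly assumes $\mc{S}\times\mc{A}$ is locally compact, which the proposition does not state; working with the composite removes that assumption and shortens the proof.
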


\begin{proposition}\label{prop2}
With the same assumptions in Proposition \ref{prop1} except for the continuity assumption about $\zeta_{\varphi_i}$. Then every feasible point of (\ref{stochastic_representation}) satisfies most of the constraints in (\ref{mamdp_problem}), except for a subset of measure zero.
\end{proposition}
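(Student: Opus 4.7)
The plan is to prove Proposition \ref{prop2} by a direct measure-theoretic argument: since $h$ is nonnegative and vanishes only at zero, an integral constraint of the form $\int h \circ g \, d\mu_{\mc{SA}} = 0$ forces $g = 0$ $\mu_{\mc{SA}}$-almost everywhere, with no continuity of $\zeta_{\varphi_i}$ required. Continuity of $\zeta_{\varphi_i}$ was used in Proposition \ref{prop1} only to upgrade the a.e.\ statement to a pointwise \emph{everywhere} statement, by combining openness of $\{g\neq 0\}$ with the full-support hypothesis on $\mu_{\mc{SA}}$; dropping continuity leaves exactly the a.e.\ conclusion claimed in Proposition \ref{prop2}.

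Concretely, I would fix any feasible $\bm{\varphi}$ for (\ref{stochastic_representation}) and set $g_i(s,a) = \bar{D}_{ii}\zeta_{\varphi_i}(s,a) - \bar{A}_i\bm{\zeta}_{\bm{\varphi}}(s,a)$ for each $i\in\mc{V}$. Each component of the vector $\mb{h}(g_i(s,a))$ is pointwise nonnegative by the hypothesis $h(x)\geq 0$, and by feasibility $\int \mb{h}(g_i(s,a))\,d\mu_{\mc{SA}} = \mb{0}$. The standard fact that a nonnegative measurable function with zero integral vanishes almost everywhere then yields, componentwise, $\mb{h}(g_i(s,a)) = \mb{0}$ for $\mu_{\mc{SA}}$-almost every $(s,a)$. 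Invoking the equivalence $h(x)=0 \iff x=0$ gives a $\mu_{\mc{SA}}$-null set $N_i$ outside of which $g_i(s,a) = 0$, i.e.\ the $i$-th constraint in (\ref{mamdp_problem}) holds. Taking $N = \bigcup_{i=1}^K N_i$, which remains $\mu_{\mc{SA}}$-null since the union is finite, delivers the proposition: every pointwise constraint in (\ref{mamdp_problem}) is satisfied on $(\mc{S}\times\mc{A}) \setminus N$, i.e.\ except on a subset of measure zero.

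The only mild obstacle is measurability of $\mb{h} \circ g_i$, which is needed both for the integral in (\ref{stochastic_representation}) to be well-defined and for the standard a.e.\ vanishing fact to apply. In the absence of continuity of $\zeta_{\varphi_i}$ I would assume measurability explicitly; this is an extremely weak requirement that is automatically satisfied by the neural-network parametrizations $\ms{v}_{\psi_i}$ and $\pi_{\phi_i}$ used in the algorithm, together with continuity of $h$ from Proposition \ref{prop1}. I do not see any further technical difficulty, and no additional structural property of $\mc{G}$, $\bar{D}$, or $\bar{A}$ is needed beyond what already enters the definition of $g_i$.
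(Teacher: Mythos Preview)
Your argument is correct and is essentially the paper's own approach: the paper packages the ``nonnegative function with zero integral vanishes $\mu$-almost everywhere'' step into an abstract Lemma~\ref{lemma2} (showing the feasibility set of (\ref{stochastic_representation}) coincides with the set of parameters on which the constraint function is positive only on a $\mu_{\mc{SA}}$-null set) and then invokes it, whereas you apply the same standard fact directly to each $\mb{h}\circ g_i$ and take a finite union of null sets. Your explicit remark on measurability is a welcome addition that the paper leaves implicit.
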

%One can show that when the support of $\mu_{\mc{SA}}$ equals $\mc{S}\times \mc{A}$, the two representations coincide except for a subset of measure zero. 
Proposition \ref{prop1} and \ref{prop2} are theoretically reassuring. In practice, $\mu_{\mc{SA}}$ will be approximated by $\mc{D}$, which is the data distribution in (\ref{mamdp_problem}).

Consider the quadratic penalty for non-consensus $h(x)=\frac{1}{2}x^2$. Under this approximated stochastic programming representation, the Lagrangian of (\ref{mamdp_problem}) is given by:
\begin{align}\label{lagrange}
\mc{L}(\bm{\varphi},\bm{\lambda}) = \frac{1}{K}\sum_{i=1}^K 
\mc{L}^i(\bm{\varphi},\lambda_i) \hphantom{----------}
\end{align}
\begin{align}\label{lagrange_i}
\mc{L}^i(\bm{\varphi},\lambda_i) = 
\underset{s,a,r^i,s^\prime \sim \mc{D}}{\mathbb{E}} &\Big( \big(\bar{f}_{\varphi_i}(s,a,s^\prime) - r^i \big)^2 \nonumber \\
+\frac{\lambda_i}{2}&||\bar{D}_{ii}\zeta_{\varphi_i}(s,a) - \bar{A}_i\bm{\zeta}_{\bm{\varphi}}(s,a)||^2\Big)
\end{align}
The primal variables $\bm{\varphi}$ and the multipliers $\bm{\lambda}$ can be solved by the primal-dual method. However, we found that using a fixed $\bm{\lambda}$ parameter achieves good empirical performance. The detailed value for the multipliers will be provided in Section IV. (\ref{lagrange_i}) has a tractable sample gradient and can be readily tackled by established deep learning routines such as stochastic gradient descent (SGD). Specifically, each agent performs the minimization of sample-estimated $\mc{L}^i(\bm{\varphi},\lambda_i)$. In addition, similar to the adapt-then-combine (ATC) algorithm \cite{sayed2014adaptation}, we first perform the minimization of the first term in (\ref{lagrange_i}), then use the immediately updated weights to evaluate and minimize the second term:
\begin{align}
    &\tilde{\varphi}^\nu_i = \varphi_i^\nu - \eta\nabla_{\varphi_i} \big(\bar{f}_{\varphi^\nu_i}(s,a,s^\prime) - r^i \big)^2 \label{1st_step}\\
    &\varphi^{\nu+1}_i = \tilde{\varphi}_i^\nu - \eta\frac{\lambda_i}{2}\nabla_{\varphi_i} ||\bar{D}_{ii}\zeta_{\tilde{\varphi}^\nu_i}(s,a) - \bar{A}_i\bm{\zeta}_{\bm{\varphi}^\nu}(s,a)||^2 \label{2nd_step} \\
    & \bar{\psi}^{\nu+1}_i = \rho \bar{\psi}^{\nu}_i + (1-\rho)\psi^{\nu+1}_i \label{3rd_step}
\end{align}
where $\nu$ is the iteration count and $\rho$ is an exponential smoothing parameter. $s,a,r^i,s^\prime$ are sampled data from the experience replay $\mc{D}$, which is assumed to be initialized by the historical data. When conducting the update in (\ref{2nd_step}), a communication of each agent $i$ with its neighbors is established. The information being transmitted includes $s,a$ and $\zeta_{\varphi_j}(s,a)$. The full algorithm is summarized in Algorithm \ref{algo1}.
\begin{algorithm}[h]
	\caption{C-MARL for VVC}
	\label{algo1}
	\textbf{Input:} Historical dataset $\mc{D}$, update frequency $C$, communication graph $\mc{G}$
	\begin{algorithmic}[1]
	\For{$i=1,\cdots,K$}{}
		\State Initialize $\varphi^0_i=[\psi^0_i,\phi^0_i], \bar{\psi}_i^0$
	\EndFor
	\For{$\nu=0,\cdots,$}{}
		\State Sample $i$ from $[1,2,\cdots,K]$ uniformly
		\State Sample mini-batch $\mc{B}=\{(s,a,r^i,s^\prime)\}$ from $\mc{D}$
		\State Update $\varphi^\nu_i$ by (\ref{1st_step})
		\State Collects $\zeta_{\varphi_j}(s,a)$ from $i$'s neighbors.
		\State Update $\varphi^\nu_i$ by (\ref{2nd_step})
		\State Update $\bar{\psi}_i^\nu$ by (\ref{3rd_step})
		\If{$\m{mod}(\nu,K\cdot C)=0$}{}
		\For{$i=1,\cdots,K$}{}
		    \State Take control actions $A^i_t\sim \pi_{\phi^{\nu+1}_i}(\cdot|S_t)$
		    \State $\mc{D} = \mc{D} \cup \{(S_t, A^i_t,R^i_{t+1},S_{t+1})\}$
		\EndFor
		\EndIf
	\EndFor
	\end{algorithmic}
\end{algorithm}
The proposed %consensus RL based VVC algorithm 
C-MARL algorithm proceeds as follows: First all agents initialize their deep NN parameters. Then the agents communicate and update their local policy and value functions according to the scheme described in (\ref{1st_step})-(\ref{3rd_step}). We let each agent communicate and update $C$ times (on average) between adjacent control actuation steps $t$ and $t+1$. At time $t$, all agents take their control actions (tap positions of the voltage regulating devices), and store the transition information into the experience replay buffer $\mc{D}$.

\subsection{Algorithm Implementation}
This subsection provides additional implementation details for the proposed C-MARL VVC algorithm. We will discuss the NN architecture design and variable encoding in these NNs.

$\bullet$ $\ms{v}_{\psi_i}(s)$ and $\ms{v}_{\bar{\psi}_i}(s)$ networks: the value networks are the standard multilayer perceptron. The input of the networks are the global state $s$ and the output is the value of that state.

$\bullet$ $\pi_{\phi_i}(a|s)$ networks: we adopt the device-decoupled network structure \cite{Wei2019Safe}, which divides the outputs of the policy network into $K$ groups. The output neurons in each group corresponds to the local action space $|\mc{A}^i|$ for each device. In addition, we adopt the ordinal encoding layer \cite{DCAOP} for each group to represent the order information of the devices' tap positions. The hidden layers are shared by all groups.

$\bullet$ Encoding the global time step $t$: in this study, we only encode the hour-of-week part of the global time step $t$, which ranges from 0 to 167. $t$ is encoded in two coordinates $[\cos(2\pi t/168), \sin(2\pi t/168)]$ to reflect its periodic nature.

% \cite{DLM2015,DQM2016,Gramm}

\section{Numerical Studies}
The numerical studies of the proposed C-MARL algorithm are conducted on three test feeders. The experimental setup for the three test feeders are provided in Section IV.A. The sample efficiency, communication efficiency, and resiliency of the proposed algorithm are validated in Section IV.B.
%Then in Section IV.B, we demonstrate the algorithms performance, robustness, and communication efficiency. 

\subsection{Numerical Setup}
\subsubsection{Distribution Networks and Nodal Power Data} The IEEE 4-bus, 34-bus, and 123-bus distribution test feeders \cite{testfeeder} are used in the numerical studies. The VVC devices are setup on these test feeders as follows: For all test feeders, a voltage regulator ($\mt{VR_1}$) is located at the substation node and controls the reference voltage. Voltage regulators have $21$ tap positions with step size $M^{\m{reg}}=0.005$, which evenly divides the turns ratios between $0.95$ and $1.05$. We assume the same tap position configuration for the OLTCs. For the 4-bus feeder, an OLTC is placed between node 2 and 3 ($\mt{TC_1}$) and a capacitor with rating $M^{\m{cap}}=200$ kVar is placed at node 4 ($\mt{CP_1}$). For the 34-bus test feeder, two OLTCs are placed between node 814 and 850 ($\mt{TC_1}$), and node 852 and 832 ($\mt{TC_2}$). Two capacitors are placed at node 844 ($\mt{CP_1}$: 100 kVar) and node 847 ($\mt{CP_2}$: 150 kVar). For the 123-bus, three OLTCs are placed between node 10 and 15 ($\mt{TC_1}$), node 67 and 160 ($\mt{TC_2}$), and node 25 and 26 ($\mt{TC_3}$). Four capacitors are placed at node 83 ($\mt{CP_1}$: 200 kVar), node 88 ($\mt{CP_2}$: 50 kVar), node 90 ($\mt{CP_3}$: 50 kVar), and node 92 ($\mt{CP_4}$: 50 kVar). The initial turns ratios of voltage regulators and OLTCs are 1. Initially, the capacitors are switched off. 

The time series of hourly load data are obtained from the London smart meter dataset \cite{loaddata}. The dataset contains one year of half-hourly smart meter kWh measurements from approximately 5,000 customers. The measurements are aggregated and scaled to match the test feeders' loading level. The final load data have the same spatial load distribution and power factors as that of the IEEE standard test cases.

\subsubsection{Local Reward Setups and Communication Networks}
The parameters that appear in the local reward (\ref{local_reward}) and local operation constraint violation (\ref{local_cv}) are as follows: For all test cases, the cost of electricity, cost per switching action, and the constraint violation penalty are set as $C^l = \$0.04/\m{kWh}$, $C^s = \$0.1$, and $\bar{\lambda}=2C^l$, respectively. The voltage bounds are $\bar{V}=1.05$ and $\underline{V}=0.95$ p.u. The capacitors meter the voltage at its own node and the line real power loss within one-degree neighbors. The voltage regulators meter the voltage at the first downstream node from the substation. The OLTCs meter the voltage at its first downstream node and the power loss on its branch. A fixed (time-invariant) communication graph is assumed for each of the test feeders. The neighbor-to-neighbor relationships are summarized in Table \ref{table_communication}. Each line represents a bi-directional communication link.
\begin{table}[h]
\setlength{\tabcolsep}{1pt}
    \centering
	\caption{Communication Networks}
	\begin{tabular}{L{2.78cm}L{2.78cm}L{2.78cm}}
		\toprule  
		 4-bus & 34-bus & 123-bus \\ \hline \vspace{0.1cm}
	     \includegraphics[width=0.44\textwidth]{./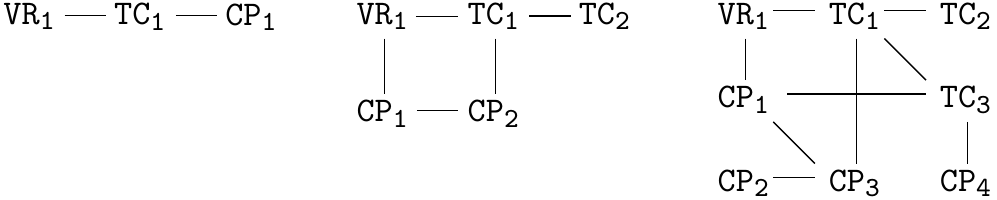} &  & \\
		\bottomrule
	\end{tabular}
	\label{table_communication}
\end{table}

\subsection{Algorithm Setup}
In the numerical studies, we compare the performance of our proposed algorithm with two benchmarks: the single-agent SAC \cite{SAC} and the multi-agent off-policy RL using the linearized ADMM consensus strategy \cite[Algorithm 1]{DLM2015}. The single-agent SAC serves as a stability baseline and the ADMM is used for comparison purpose.

$\bullet$ For the single-agent SAC, the reward is defined as the average of the local rewards. The agent's action is defined as the union of the local actions.

$\bullet$ For the linearized ADMM, the optimization variables for each agent are the deep NN parameters $\varphi_i$. We maintain a separate deep NN $\zeta_{\underline{\varphi}_i}$, whose structure is the same as $\zeta_{\varphi_i}$ and the parameters $\underline{\varphi}_i$ are the local dual variables. The same target network construct for evaluating $\ms{v}(s^\prime)$ is adopted.

The hyperparameters of the algorithms are provided in Table \ref{table_params}. The hyperparameters of the algorithms are tuned individually to reach their best performance. The last row of Table \ref{table_params} shows the parameters shared by all algorithms. Note that we scale the reward function to match the weights of neural networks. If not specified otherwise, these parameters will be used for all the numerical studies. Three parameters in the curly brackets are for the three distribution networks, from left to right, 4-bus, 34-bus, and 123-bus, respectively.
\begin{table}[h]
	\caption{Hyperparameters of Benchmark and Proposed Algorithms}
	\begin{tabular}{L{1.5cm} L{3.5cm} L{2.5cm}}
		\toprule  
		\multirow{2}{*}{SAC} & temperature parameter $\alpha$ & $\{0.5,0.2,0.1\}$  \\ 
		    & learning rate & $0.001$  \\ 
		    & number of hidden units & $\{64, 80, 128\}$ \\
		    & smoothing parameter $\rho$ & $0.99$ \\ 
		    & minibatch size & $16$ \\\hline    
		\multirow{2}{*}{ADMM} & temperature parameter $\alpha$ & $\{0.5,0.2,0.1\}$  \\ 
		    & $c$ in \cite{DLM2015} & $1$ \\ 
		    & $\rho$ in \cite{DLM2015} & $500$ \\ 
		    & number of hidden units & $\{32, 64, 64\}$ \\
		    & smoothing parameter $\rho$ & $0.99$ \\ 
		    & minibatch size & $16$ \\\hline        
		\multirow{2}{*}{C-MARL} & temperature parameter $\alpha$ & $\{0.5,0.2,0.1\}$  \\ 
		    & learning rate & $0.001$  \\ 
		    & number of hidden units & $\{32,64,128\}$ \\
		    & smoothing parameter $\rho$ & $0.99$ \\ 
		    & minibatch size & $16$ \\\hline        
		\multirow{2}{*}{shared} &discount factor & $0.95$  \\
		       & update frequency $C$ & 1 \\
		       & consensus parameter $\lambda_i$ & 1 \\
		       &number of hidden layers & 2 \\
		       &hidden unit nonlinearity & tanh  \\
		       &optimizer & Adam \\
		       &reward scale & 5 \\
		\bottomrule
	\end{tabular}
	\label{table_params}
\end{table}

\subsection{Stability, Sample Efficiency, and Communication Efficiency}
In this subsection, we report the stability, sample efficiency, and communication efficiency of the proposed and benchmark VVC algorithms. The average of the hourly rewards in (\ref{local_reward}) and the average of the constraint violations in (\ref{local_cv}) versus the number of training samples and the number of transmitted data points are shown in Fig. \ref{fig_bus4}-\ref{fig_bus123}. The horizontal axis beneath the plots shows the number of training samples of the form $(S_t,A_t,R_{t+1},S_{t+1})$. For the proposed C-MARL algorithm, the data being transmitted include the global time steps $\{t\}$ and the corresponding values $\ms{v}_{\psi_i}(\{S_t\}), \pi_{\phi_i}(\{A_t\}|\{S_t\})$ of the mini-batch; for the ADMM consensus strategy, the data being transmitted are the neural network weights $\varphi_i = [\psi_i, \phi_i]$. For all figures, the solid curve represents the median of five independent runs; the shaded areas are the upper and lower error bounds.

As shown in Fig. \ref{fig_bus4}-\ref{fig_bus123}, all three algorithms' performances stabilize after a certain amount of training samples are collected and used for training. Our proposed algorithm achieves a similar level of performance as the single-agent benchmark in all test cases in terms of hourly reward and constraint violation. This demonstrates the effectiveness of the proposed randomized consensus protocol. The proposed algorithm yields significant improvement on communication efficiency compared with the ADMM consensus protocol. In addition, the communication cost of our proposed algorithm stays constant across the test feeders. This is because only the sample data are transmitted. The communication burden in the ADMM consensus strategy grows quickly with the size of the physical network and the number of agents.
\begin{figure}[h]
	\centering
	\includegraphics[width=9cm]{./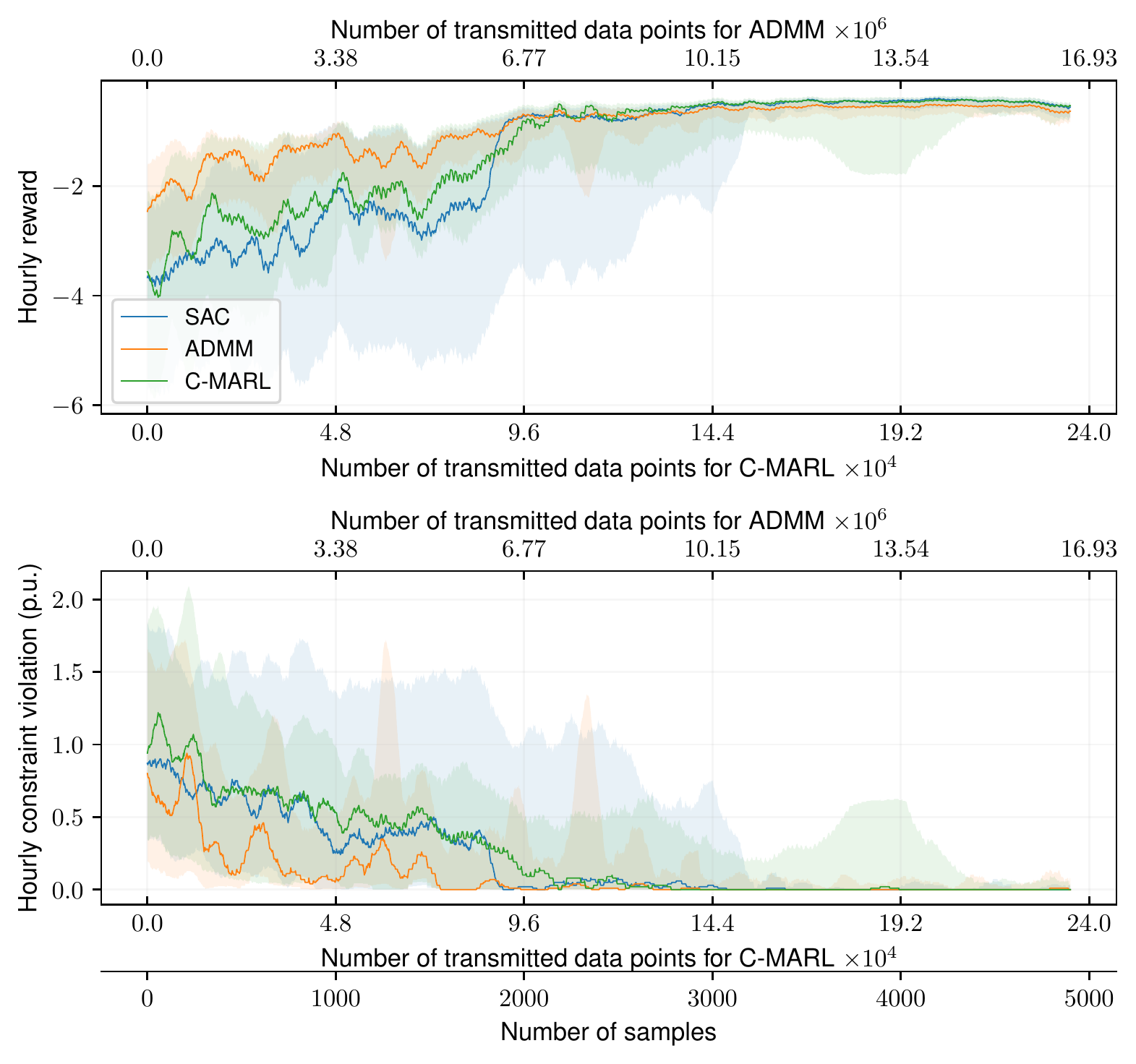}
	\caption{Hourly reward and voltage violation of 4-bus feeder}
	\label{fig_bus4}
\end{figure}
\begin{figure}[h]
	\centering
	\includegraphics[width=9cm]{./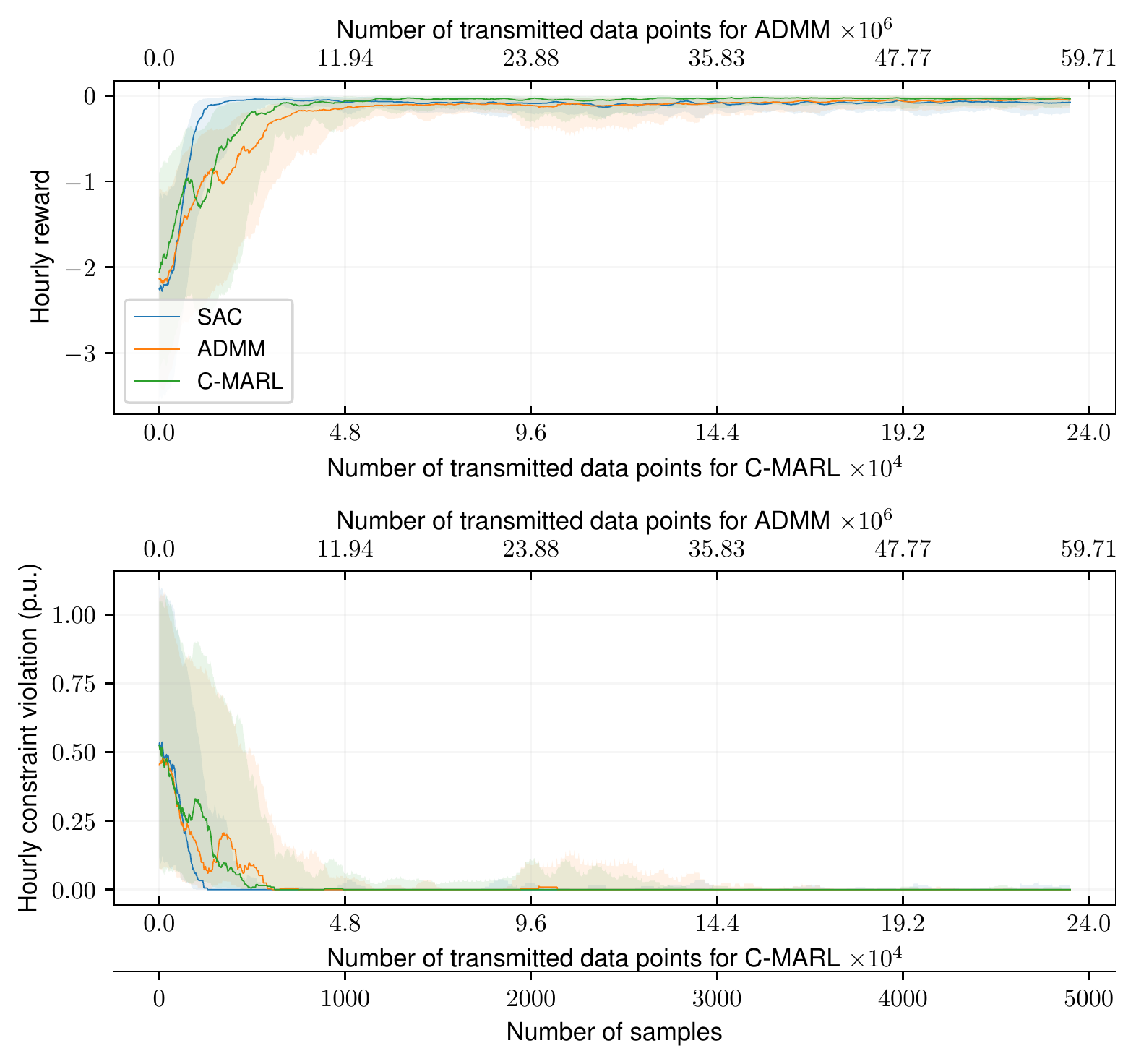}
	\caption{Hourly reward and voltage violation of 34-bus feeder}
	\label{fig_bus34}
\end{figure}
\begin{figure}[h]
	\centering
	\includegraphics[width=9cm]{./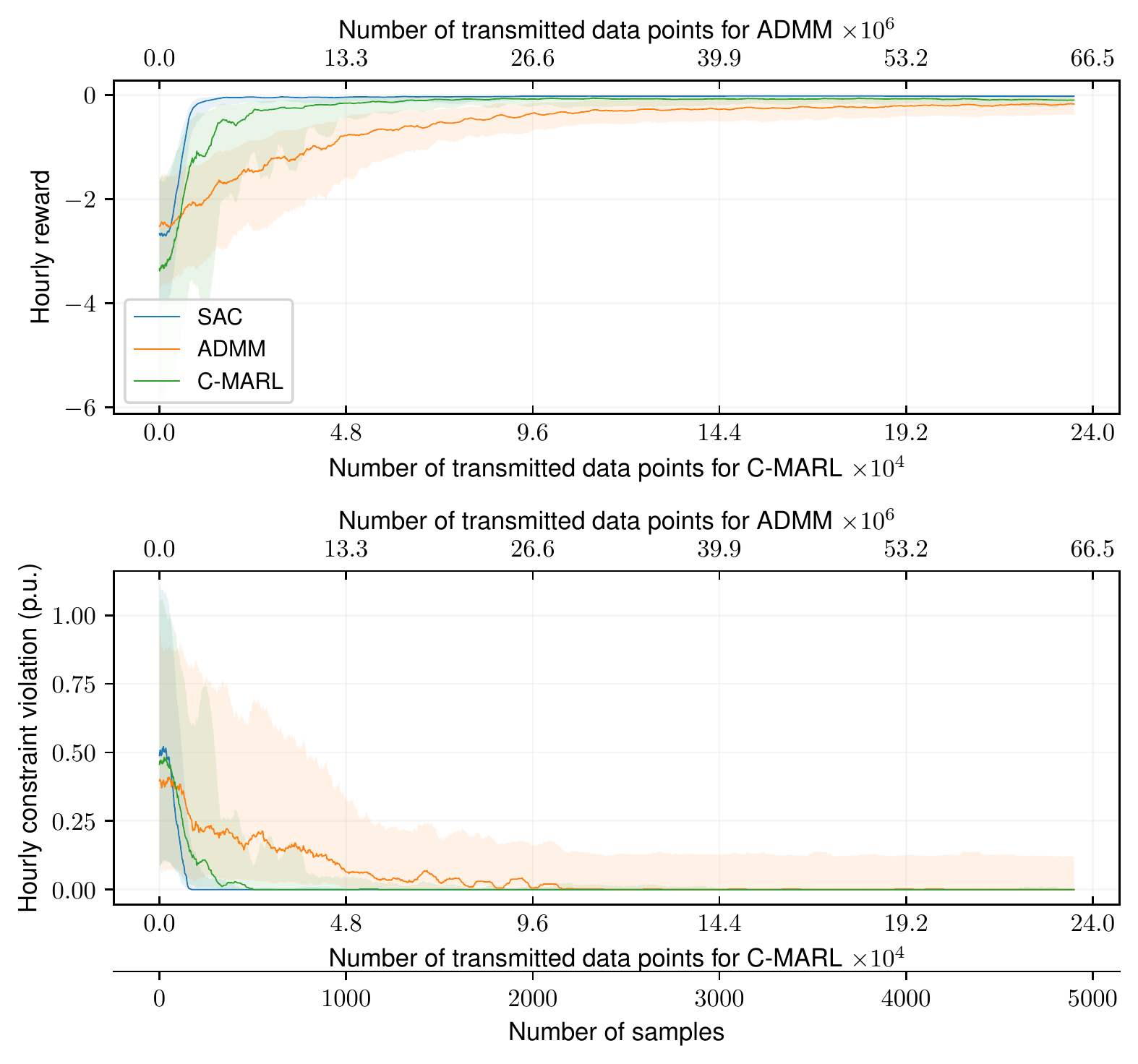}
	\caption{Hourly reward and voltage violation of 123-bus feeder}
	\label{fig_bus123}
\end{figure}

\subsection{Resiliency against Agent and Communication Link Failure}
One key advantage of distributed algorithms over the centralized ones is that when an individual agent or communication link fails, the rest of the system can continue to function. In this subsection, a few experiments are carried out to evaluate the proposed algorithm's resiliency against failures of individual components. Two types of component failures are considered:
%we create experiments to test the algorithm's robustness against such failure. The type of failures are as follows:
%under individual agent or communication link failure, the controls are allowed to continue instead of systematic fail.
\begin{enumerate}
    \item[E.1] An agent experiences an internal error so that the computation and control cannot be properly executed. However, it is still able to communicate with its neighbors. In this scenario, the agent freezes the tap position of its device and stops training, while other agents continue their controls and training.
    \item[E.2] A communication link is temporarily down. If the overall communication graph is still connected, then the agents function normally except for the altered communication graph connectivity. If the overall communication graph is disconnected, then the distribution network becomes partially observable. In this case, the agents will create a replacement state $\hat{S}_t=[\hat{\vb{p}}_t,\hat{\vb{q}}_t, \hat{A}_{t-1}, t]$ and take action based on $\hat{S}_t$. The nodal power $\hat{\vb{p}}_t,\hat{\vb{q}}_t$ are obtained from the historical average; the joint actions $\hat{A}_{t-1}$ are sampled from the agent's own policy network. Please note that each agent maintains a local copy of the joint policy network. The agent's experience involving replacement states won't be stored in the replay memory $\mc{D}$.
\end{enumerate}

We create two sets of experiments for the test feeders to demonstrate the consequences of the two types of failures E.1 and E.2. For all experiments, the occurrences of component failures are assumed to follow a Poisson process with rate $\lambda=\frac{1}{168}$. That is, the inter-event times are independent exponential random variables with scale parameter $\beta=\frac{1}{\lambda}=168$ (hr). The duration of each failure is assumed to follow the geometric distribution with success probability $0.2$. For the first experiment, all agents are assumed to have an equal chance of failure. The communication link failures in the second experiment are treated similarly. %thus are uniformly sampled

Simulation results for the two experiments are shown in Fig. \ref{fig_bothfailure}. Each experiment occupies one column. The blue curves represent the failure scenarios. The orange curves represent the corresponding counterfactual experiment, which has the identical simulation setup but without agent or communication failure. Fig. \ref{fig_bothfailure} shows that the proposed algorithm is resilient facing random agent or communication link failures. The algorithm performance degradation is negligible if the time to clear component failure is not too long. The impact of agent or communication failure on long-term algorithm performance is much smaller than that of short-term performance.
\begin{figure}[h]
	\centering
	\includegraphics[width=8.8cm]{./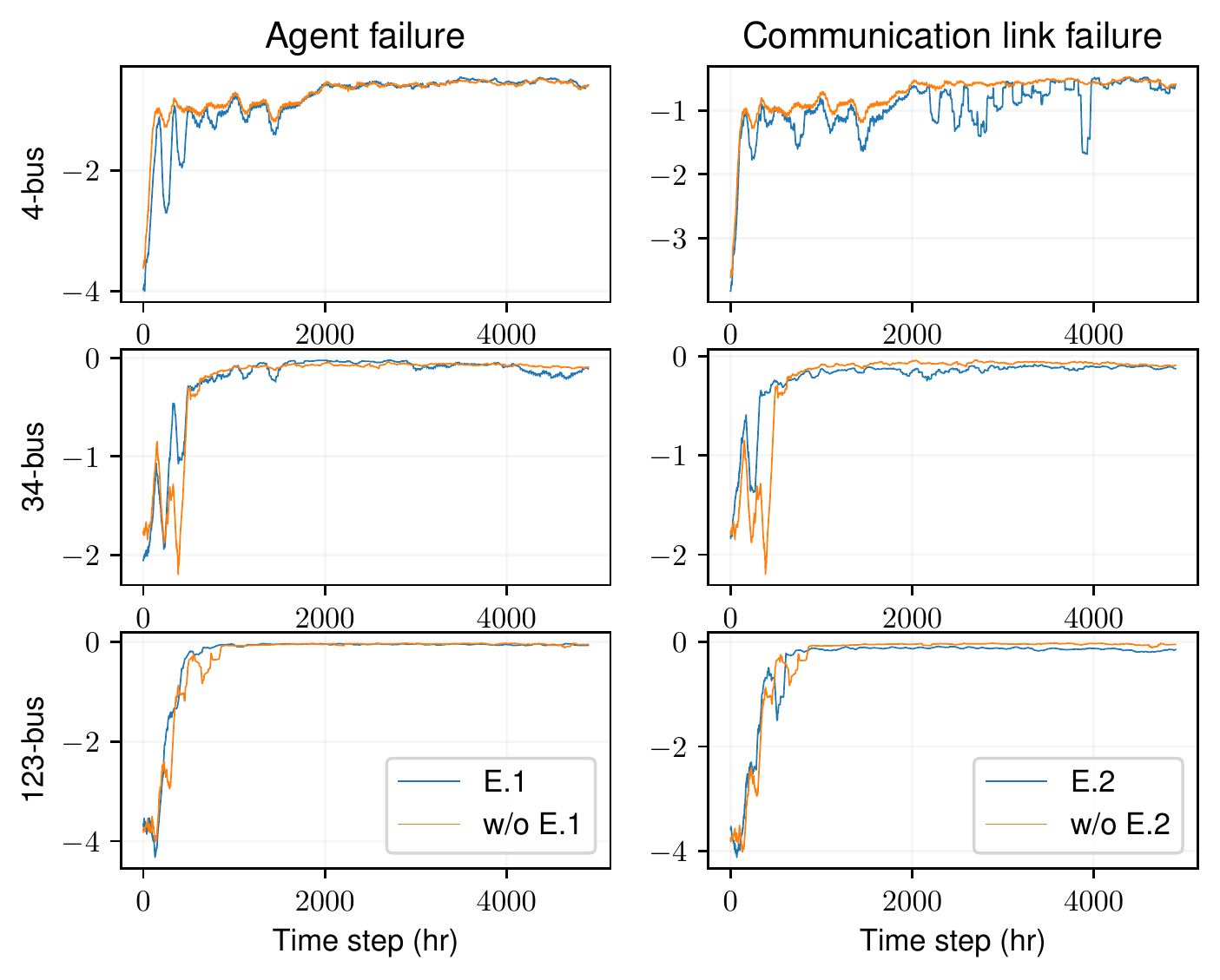}
	\caption{Hourly reward under agent/communication link failure}
	\label{fig_bothfailure}
\end{figure}

\section{Conclusion}
This paper proposes a multi-agent reinforcement learning algorithm to solve the Volt-VAR control problem in power distribution systems. We extend the centralized off-policy maximum entropy RL framework to a networked multi-agent MDP model. A randomization-based consensus algorithm is developed to solve the networked multi-agent MDP. Our proposed algorithm is decentralized and fully data-driven, which enables the control of voltage regulating devices without a central controller or knowledge of the distribution network topology and parameter information. Numerical study results of a comprehensive set of IEEE test feeders show that the proposed algorithm achieves a similar level of performance as the centralized RL benchmark. Our proposed algorithm is much more communication efficient than existing consensus strategy such as ADMM. Moreover, our proposed algorithm is resilient against communication link and agent failure as demonstrated by the simulation results.

% if have a single appendix:
\appendix[Proof of Propositions]

\begin{lemma}\label{lemma1}
Let $(X, d_X)$ and $(Y, d_Y)$ be two metric spaces and $(Y, \Sigma, \mu)$ be a measure space. Further, let $\Sigma$ be generated by the open sets in $(Y, d_Y)$; and $\mu$ has full support in the sense that $\mu(S)>0$ for all non-empty open sets $S$ in $\Sigma$. Let $f:X\times Y\mapsto \mathbb{R}^k,k\geq 1$ be a non-negative function that is continuous for every $x$. Then the two sets $C$ and $D$ are equal:
\begin{align*}
    C & = \{x | f(x,y)=0, \forall y\in Y\} \\
    D & = \{x | \int f(x,y)d\mu(y) =0\}
\end{align*}
\end{lemma}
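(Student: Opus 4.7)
The plan is to prove the two set inclusions $C \subseteq D$ and $D \subseteq C$ separately. The first inclusion is immediate: if $x \in C$ then $f(x,y)$ is identically zero on $Y$, so $\int f(x,y)\,d\mu(y) = 0$ component-wise by linearity of the Lebesgue integral, giving $x \in D$.

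For the reverse inclusion $D \subseteq C$, I would argue by contrapositive. Assume $x \notin C$, so there exists some $y_0 \in Y$ with $f(x,y_0) \neq 0$. Since $f$ is non-negative and vector-valued in $\mathbb{R}^k$, at least one component satisfies $f_j(x,y_0) > 0$. Continuity of $f(x,\cdot)$ at $y_0$ (with respect to $d_Y$) then yields an open neighborhood $U \subseteq Y$ of $y_0$ on which $f_j(x,y) > f_j(x,y_0)/2$. The full support assumption gives $\mu(U) > 0$, and non-negativity of $f_j$ allows the estimate
\begin{equation*}
\int_Y f_j(x,y)\,d\mu(y) \;\geq\; \int_U f_j(x,y)\,d\mu(y) \;\geq\; \tfrac{1}{2}f_j(x,y_0)\,\mu(U) \;>\; 0,
\end{equation*}
so the $j$-th component of $\int f(x,y)\,d\mu(y)$ is strictly positive, and $x \notin D$.

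I do not anticipate a real obstacle here; the argument is a standard application of continuity plus full support of $\mu$. The only mild care is to interpret the vector-valued integral componentwise and to invoke measurability of $f_j(x,\cdot)$ (which follows from continuity together with the assumption that $\Sigma$ is generated by the open sets of $d_Y$, so that continuous functions are automatically measurable). Proposition~\ref{prop1} then follows by applying the lemma with $X$ the parameter space of $\bm{\varphi}$, $Y = \mc{S}\times\mc{A}$, $\mu = \mu_{\mc{SA}}$, and $f(\bm{\varphi},(s,a)) = \mb{h}(\bar{D}_{ii}\zeta_{\varphi_i}(s,a) - \bar{A}_i\bm{\zeta}_{\bm{\varphi}}(s,a))$, which is non-negative by the property $h(x) \geq 0$ and continuous in $(s,a)$ by the assumed continuity of $\zeta_{\varphi_i}$ and $h$.
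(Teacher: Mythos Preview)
Your proof is correct and follows essentially the same approach as the paper: both show $C\subseteq D$ trivially and then argue the contrapositive of $D\subseteq C$ by picking a coordinate where $f$ is positive, using continuity to get an open neighborhood on which that coordinate stays bounded away from zero, and invoking full support of $\mu$ to make the integral strictly positive. Your version is slightly more explicit about the choice of lower bound ($f_j(x,y_0)/2$ versus the paper's generic $c-\epsilon$) and about measurability, but the argument is the same.
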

Equalities and inequalities are understood to be element-wise.
\begin{proof}
It is clear that $C\subseteq D$ since the condition in $C$ implies that in $D$. To demonstrate that they are equal, let a point $x\notin C$, therefore $f_i(x,y)=c>0$ for some $y$ and some coordinate $i$ of $f$. Then by the continuity of $f_i(x, \cdot)$, for every $\epsilon>0$, there exists $\delta>0$, such that the condition $d(f_i(x,y^\prime),f_i(x,y))<\epsilon$ is satisfied for every $y^\prime\in B^\delta(y)\triangleq \{y^\prime| d_Y(y^\prime,y)<\delta\}$. Pick a small enough $\epsilon < c$, then we have
\begin{align*}
    \int f_i(x,y^\prime)d\mu(y^\prime) & \geq \int_{B^\delta(y)}f_i(x,y^\prime)d\mu(y^\prime) \\
    & \geq (c-\epsilon)\mu(B^\delta(y)) \\
    & > 0 
\end{align*}
The last inequality is due to the full support assumption of $\mu$. This shows that $x\notin D$. Therefore $C=D$.
\end{proof}

\begin{proof}[Proof of Proposition \ref{prop1}]
We identify the space of all neural network weights $W\subseteq \mathbb{R}^N$ with Euclidean metric as the $X$ space in Lemma \ref{lemma1}. We can identify the state-action space $\mc{S}\times \mc{A}$ as the $Y$ space by defining a metric $d$ on $\mc{S}\times \mc{A}$. The measure $\mu_{\mc{SA}}$ on the state-action space has the full support property stated in Lemma \ref{lemma1}. We identify the function $f$ as $\mb{h}( \bar{D}_{ii}\zeta_{\varphi_i}(s,a)-\bar{A}_i\bm{\zeta}_{\bm{\varphi}}(s,a))$ in the statement of Proposition \ref{prop1}. We further identify that the set $C$ and $D$ in Lemma \ref{lemma1} correspond to the constraint set in (\ref{mamdp_problem}) and the set expressed by (\ref{stochastic_representation}). By Lemma \ref{lemma1}, these two sets are equal.
\end{proof}

\begin{lemma}\label{lemma2}
Consider the situation in Lemma \ref{lemma1} except that $f$ may not be continuous, and that $\mu$ may not have full support. Then the following two sets are equal:
\begin{align*}
    C & = \{x| \mu( \{ y | f_i(x,y)>0  \; \text{for some}\; i\} ) = 0 \} \\
    D & = \{x | \int f(x,y)d\mu(y) =0\}
\end{align*}
\end{lemma}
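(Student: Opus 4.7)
The plan is to reduce Lemma \ref{lemma2} to the classical measure-theoretic fact that a non-negative measurable function has zero integral if and only if it vanishes almost everywhere. Since $f$ takes values in $\mathbb{R}^k$ with each coordinate non-negative, and both the integral condition defining $D$ and the positivity condition defining $C$ decouple across coordinates (the set appearing in $C$ is the finite union $\bigcup_{i=1}^k \{y : f_i(x,y) > 0\}$), I will argue coordinate-wise and then combine via a finite union bound.

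For the inclusion $C \subseteq D$, I would observe that if $x \in C$ then $\mu(\{y : f_i(x,y) > 0\}) = 0$ for each coordinate $i$, so $f_i(x,\cdot) = 0$ $\mu$-a.e., and hence $\int f_i(x,y)\,d\mu(y) = 0$; assembling the $k$ coordinates yields $x \in D$. This direction uses neither continuity nor full support and is essentially immediate.

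For the reverse inclusion $D \subseteq C$, I would fix $x \in D$ and each coordinate $i$, and use the standard decomposition
\[
\{y : f_i(x,y) > 0\} = \bigcup_{n\geq 1} \{y : f_i(x,y) > 1/n\},
\]
combined with Markov's inequality $\mu(\{f_i(x,\cdot) > 1/n\}) \leq n \int f_i(x,y)\,d\mu(y) = 0$, to conclude by countable subadditivity that $\mu(\{f_i(x,\cdot) > 0\}) = 0$. A final finite-union bound across the $k$ coordinates then places $x$ in $C$.

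I do not anticipate a serious obstacle here: the weakening of hypotheses relative to Lemma \ref{lemma1} (dropping continuity of $f$ and the full-support property of $\mu$) is compensated by the weaker conclusion (almost-everywhere vanishing in place of pointwise vanishing), so none of the open-ball approximation from Lemma \ref{lemma1}'s proof is needed. The only point requiring mild care is that the "for some $i$" quantifier translates to a finite union, so the coordinate-wise null sets combine into a null set; had the codomain been infinite-dimensional, this step would not go through without further assumptions.
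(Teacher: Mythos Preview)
Your proof is correct and follows essentially the same approach as the paper: both directions reduce to the classical fact that a nonnegative measurable function has zero integral if and only if it vanishes $\mu$-almost everywhere. The paper's argument for $D\subseteq C$ partitions $Y$ into $N_1(x)=\{y:f_i(x,y)>0\text{ for some }i\}$ and its complement and asserts that $\int_{N_1(x)}f\,d\mu=0$ forces $\mu(N_1(x))=0$, whereas you supply the Markov-inequality/countable-subadditivity justification of that same step coordinate-wise; your version is more explicit at exactly the point the paper glosses over, but not materially different.
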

\begin{proof}
Let $x\in C$, then $\int f(x,y)d\mu(y)=0$ since the set $f(x,y)>0$ has measure zero. Thus $x\in D$. Hence $C\subseteq D$. On the other hand, let $x\in D$, define two sets $N_1(x)=\{y|f_i(x,y)>0 \; \text{for some}\; i\}$ and $N_2(x)=\{y|f(x,y)=0\}$. Thus
\begin{align*}
    \int f(x,y)d\mu(y) & = 0\\
    & = \int_{N_1(x)} f(x,y)d\mu(y) + \int_{N_2(x)} f(x,y)d\mu(y) \\
    & = \int_{N_1(x)} f(x,y)d\mu(y)
\end{align*}
Thus $\mu(N_1(x)) = 0$ and therefore $x\in C$. Hence $D\subseteq C$. Combining the two directions shows that $C=D$. Therefore, every point $x$ in $D$ satisfies $f(x,y)=0$ for most of $y$ except for some $y$ with measure zero.
\end{proof}
\begin{proof}[Proof of Proposition \ref{prop2}]
Ditto as proof of Proposition \ref{prop1}. Since the measure $\mu_{\mc{SA}}$ can be arbitrary, it can be selected as full support which covers almost all $s,a$ pairs.
\end{proof}
% or
%\appendix  % for no appendix heading
% do not use \section anymore after \appendix, only \section*
% is possibly needed

% use appendices with more than one appendix
% then use \section to start each appendix
% you must declare a \section before using any
% \subsection or using \label (\appendices by itself
% starts a section numbered zero.)
%

% \appendices
% \section{Proof of the First Zonklar Equation}
% Appendix one text goes here.

% % you can choose not to have a title for an appendix
% % if you want by leaving the argument blank
% \section{}
% Appendix two text goes here.

% use section* for acknowledgment
% \section*{Acknowledgment}

% Can use something like this to put references on a page
% by themselves when using endfloat and the captionsoff option.
\ifCLASSOPTIONcaptionsoff
  \newpage
\fi

% trigger a \newpage just before the given reference
% number - used to balance the columns on the last page
% adjust value as needed - may need to be readjusted if
% the document is modified later
%\IEEEtriggeratref{8}
% The "triggered" command can be changed if desired:
%\IEEEtriggercmd{\enlargethispage{-5in}}

% references section

% can use a bibliography generated by BibTeX as a .bbl file
% BibTeX documentation can be easily obtained at:
% http://mirror.ctan.org/biblio/bibtex/contrib/doc/
% The IEEEtran BibTeX style support page is at:
% http://www.michaelshell.org/tex/ieeetran/bibtex/
\bibliographystyle{IEEEtran}
% argument is your BibTeX string definitions and bibliography database(s)
\bibliography{MAVVC}
\end{document}